\definecolor{myurlcolor}{rgb}{0,0,0.7}
\definecolor{myrefcolor}{rgb}{0.8,0,0}
\newcommand{\R}{\mathbb{R}}
\newcommand{\C}{\mathbb{C}}
\newcommand{\ket}[1]{| #1 \rangle}
\newtheorem{prop}{Proposition}
\newtheorem{theorem}[prop]{Theorem}
\newtheorem{definition}[prop]{Definition}
\newcommand{\beq}{\begin{equation}}
\newcommand{\eeq}{\end{equation}}
\newcommand{\bea}[1]{\begin{equation}\begin{array}{#1}}
\newcommand{\eea}{\end{array}\end{equation}}
\newcommand{\beqn}{\begin{eqnarray}}
\newcommand{\eeqn}{\end{eqnarray}}
\renewcommand{\rho}{\varrho}
\newcommand{\processlist}[3][\relax]{%
  \def\listfinish{#1}%
  \long\def\listact{#2}%
  \processnext#3\listfinish}
\newcommand{\processnext}[1]{%
  \ifx\listfinish#1\empty\else\listact{#1}\expandafter\processnext\fi}
\renewcommand{\H}[1]{H(\processlist{X_}{#1})}
\begin{document}
\title{An entropic approach to local realism and noncontextuality}
\author{Rafael Chaves}
\affiliation{ICFO--Institut de Ciencies Fotoniques,
Mediterranean Technology Park, 08860 Castelldefels (Barcelona),
Spain}
\author{Tobias Fritz}
\affiliation{ICFO--Institut de Ciencies Fotoniques,
Mediterranean Technology Park, 08860 Castelldefels (Barcelona),
Spain}

\begin{abstract}
For any Bell locality scenario (or Kochen-Specker noncontextuality
scenario), the joint Shannon entropies of local (or noncontextual)
models define a convex cone for which the non-trivial facets are
tight entropic Bell (or contextuality) inequalities. In this paper
we explore this entropic approach and derive tight entropic
inequalities for various scenarios. One advantage of entropic
inequalities is that they easily adapt to situations like
bilocality scenarios, which have additional independence
requirements that are non-linear on the level of probabilities,
but linear on the level of entropies. Another advantage is that,
despite the nonlinearity, taking detection inefficiencies into
account turns out to be very simple. When joint measurements are
conducted by a single detector only, the detector efficiency for
witnessing quantum contextuality can be arbitrarily low.
\end{abstract}

\pacs{03.67.-a, 03.67.Mn, 42.50.-p} \maketitle
\section{Introduction}
\label{introduction}Quantum mechanics predicts that experiments
performed by space-like separated and independent observers may
display nonlocal correlations, which cannot be explained solely by
past interactions. The assumption that physical quantities have
well established values previous to any measurement and that
signals cannot travel faster than the speed of light, as
stipulated by special relativity, entails limits on the
correlations the observers may obtain. Such restrictions, usually
expressed as \emph{Bell inequalities}~\cite{Bell}, may be
surpassed within quantum theory when the observers share entangled
quantum states, and it is in this sense that the quantum
correlations are nonlocal.

Similarly, noncontextuality is a classically well-defined property
of mutually compatible observables. Two observables $A$ and $B$
are mutually compatible if the result for the measurement of $A$,
even if not performed, does not depend on the prior or
simultaneous measurement of $B$ and vice versa. The notion of
noncontextuality is precisely captured by the Kochen-Specker (KS)
theorem \cite{KStheorem}, stating that no noncontextual hidden
variable model (NCHV) can reproduce the results of quantum
mechanics. Interestingly, as opposed to Bell's theorem, the KS
theorem is not state-dependent and it holds for any physical
system (composite or not) with a state space of dimension higher than two.

Focusing on the Bell scenario, once the particular scenario is
defined---the number of spatially separated parties, the number of
measurements settings for each party and the number of outcomes
for each setting---the associated local (realistic) models form a
convex set with a finite number of extremal points, an object
known as the \emph{local polytope}~\cite{Pitowsky}. The tight Bell
inequalities are the non-trivial facets bounding the local
polytope. Given this geometric picture, a \emph{nonlocal model} is
a point outside the local polytope, or, equivalently, a point
which violates a Bell inequality.

It is clear that to properly understand nonlocality, it should be
considered from as many aspects as possible. In the
information-theoretic approach introduced by Braunstein and
Caves~\cite{CHSHentropic}, it was shown that if local realism
holds, then the joint Shannon entropies carried by the
measurements on two distant systems must satisfy certain
inequalities, which can be regarded as \emph{entropic Bell
inequalities}. One advantage of this entropic approach is that the
inequalities do not depend on the number of outcomes of the
measured observables, which implies that they can readily be
applied to quantum systems of arbitrary local dimension and to the
consideration of detection inefficiencies~\cite{Eberhard}.

In this paper, our aim is to further develop this entropic program
to Bell inequalities and also introduce entropic inequalities to
the study of (quantum and post-quantum) contextuality. As we
discuss in Sec.~\ref{sec:entropic_polytope} and in more detail
in~\cite{entropicpaper2}, the standard information-theoretic
inequalities for joint Shannon entropies (monotonicity and
submodularity) define a convex cone~\cite{Yeung} whose projection
to the joint entropies of jointly measurable observables is
another convex cone, whose facets correspond to all the optimal
Shannon-type entropic Bell inequalities.

From this geometrical point of view, we investigate the family of
chained entropic inequalities derived in
Ref.~\cite{CHSHentropic}---which we have shown to be the tightest
entropic inequalities in the appropriate
scenarios~\cite{entropicpaper2}---and look for violations in
quantum and post-quantum probabilistic theories. In particular, we
show how an entropic contextuality inequality can be
violated---for joint measurements of the commuting
observables---even for arbitrarily low detection efficiencies.
Moreover, in Sec.~\ref{sec:distilation} we analyze a possible
conection between the entropic CHSH inequality and nonlocality
distillation~\cite{Dist_Short, Dist_Foster, Dist_Brunner,
Dist_Allcock, Dist_Hoyer, Dist_Cavalcanti}. For a bilocality
scenario~\cite{swapping2} (a Bell scenario with two indepedent
sources which allow entanglement swapping~\cite{swapping}), we
derive in Sec.~\ref{sec:entropic_bilocal} all the relevant tight
(Shannon-type) entropic inequalities.

\section{Marginal Scenarios and Entropic Inequalities}
\label{sec:entropic_polytope} In the following, we collect the
basic definitions and results concerning entropic inequalities.
For a more detailed discussion, we refer to
Ref.~\cite{entropicpaper2}.

\begin{figure}[t!]
\begin{center}
\includegraphics[width=\linewidth]{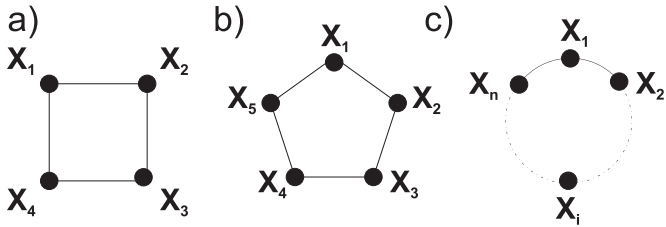}
\caption{(Color online) \label{polygons}
 {Some contextuality and nonlocality
scenarios; the number of outcomes of each observable is arbitrary.} {\bf a.} CHSH scenario, 2 parties with 2 measurements
settings each {\bf b.} Klyachko scenario, 5 observables arranged in a
cyclic configuration such that each observable is compatible with its
neighbors {\bf c.} Generalization of the Klyachko scenario, with
$n$ observables in a cyclic configuration (the $n$-cycle~\cite{LSW}). The unique non-trivial
entropic inequality in all these cases is given by~(\ref{polygon}). \label{fig:context}}
\end{center}
\end{figure}

\paragraph*{Marginal scenarios.} The concept of \emph{marginal scenario} subsumes both Bell scenarios and contextuality scenarios.
A marginal scenario is defined by specifying a set of observables
$X_1, \dots, X_{n}$ for which certain subsets are known to be
compatible and can be jointly measured. If a subset of these
observables can be jointly measured, then so can any smaller
subset; therefore, the collection of subsets of jointly measurable
observables should be closed under taking smaller subsets. These
are the \textit{marginal scenarios} which we have discussed
in~\cite{entropicpaper2}; see there for more background and
references on the marginal problem. The \textit{measurement
covers} of~\cite{AbramBrand} capture precisely the same idea in a
slightly different way.

\begin{definition}
A \emph{marginal scenario} $\mathcal{M}$ is a collection
$\mathcal{M}=\{S_1,\ldots,S_{|\mathcal{M}|}\}$ of subsets $S_i\subseteq\{X_1,\ldots,X_n\}$ such that
if $S\in\mathcal{M}$ and $S'\subseteq S$, then also $S'\in\mathcal{M}$.
\end{definition}

In this sense, every Bell scenario is a marginal
scenario~\cite[Sec.~2.4.1]{AbramBrand}: the collection of
observables $X_1,\ldots,X_{n}$ should comprise all observables of
all parties, and a subset of these observables is jointly
measurable if it does not contain two different observables of the
same party. For example in the bipartite case, with Alice having
access to observables $A_0,\ldots,A_{m-1}$ and Bob to
$B_0,\ldots,B_{m-1}$; if we write $X_i=A_{i-1}$ and
$X_{i+m}=B_{i-1}$ for $i=1,\ldots,m$, then the set of all
observables is $\{X_1,\ldots,X_{2m}\}$, and the subsets of jointly
measurable observables are the empty subset, the one-observable
subsets, and the two-observable subsets $\{X_i,X_j\}$ where $i\leq
m$ and $j>m$.

In a physical realization of a marginal scenario $\mathcal{M}$,
one measures some joint statistics for every $S\in\mathcal{M}$.
This means that one assigns a joint probability distribution to
every jointly measurable set of observables. We use notation like
$P(1,0|X_3,X_5)$ for the probability of obtaining the outcomes
$X_3=1$ and $X_5=0$ in a joint measurement of $X_3$ and $X_5$
(assuming that $\{X_3,X_5\}\in\mathcal{M}$).

If $S\in\mathcal{M}$ and $S'\subseteq S$, then one can take the
marginal of the distribution assigned to $S$ and obtain a
distribution over the outcomes of the observables in $S'$.
Naturally, this marginalized distribution should be the one
assigned to $S'$. Requiring this property leads to the
\textit{marginal models} of~\cite{entropicpaper2}, or,
equivalently, to the \textit{sheaf condition} and the
\textit{empirical models} of~\cite{AbramBrand}. In the case of
Bell scenarios, the marginal models are precisely the no-signaling
boxes.

\paragraph*{Noncontextual hidden variables.}

We now define when a marginal model is contextual. The most
intuitive theories of physics are those where there exists a
certain ``hidden'' variable $\lambda$, distributed according to
probabilities $\rho(\lambda)\geq 0$ with
$\sum_\lambda\rho(\lambda)=1$, such that $\lambda$ determines the
complete future behavior of the system. Here, completeness means
that the distribution $P(x|X_i,\lambda)$ of any observable $X_i$,
given a certain value of $\lambda$, should be independent of the
outcome distributions of all other observables. This implies that
when $X_i$ and $X_j$ are jointly measurable, then their outcome
distribution is given by

\beq \label{noncontextual} P(x_i,x_j|X_i,X_j)=\sum_{\lambda}
\rho(\lambda) P(x_i|X_i,\lambda)P(x_j|X_j,\lambda), \eeq and
similarly for cases where more than two observables are jointly
measured. If there exist conditional distributions
$P(x_i|X_i,\lambda)$ and a hidden variable distribution
$\rho(\lambda)$ such that~(\ref{noncontextual}) holds for all
jointly measurable pairs $\{X_i,X_j\}\in\mathcal{M}$ and more
generally for all $S\in\mathcal{M}$, then we say that we have
found a \emph{noncontextual hidden variable model}, and the given
marginal model $P$ is \textit{noncontextual}; otherwise $P$ is
called \emph{contextual}. In the case of Bell scenarios, the
noncontextual hidden variable models are precisely the local
hidden variable models, in which case we also use the standard
terminology of ``local'' and ``nonlocal''.

Following~\cite[Thm.~6]{LSW} or~\cite[Thm.~8.1]{AbramBrand}, we
note that the noncontextuality of $P$ is equivalent to the
existence of a joint distribution \beq \label{jointpd}
P(x_1,\ldots,x_{n}|X_1,\ldots, X_{n}) = p(x_1,\ldots,x_n) \eeq
which marginalizes to the given distributions for all
$S\in\mathcal{M}$.

The main question is: how is it possible to decide whether a given
marginal model $P$ in a marginal scenario $\mathcal{M}$ is
contextual or noncontextual?

\paragraph*{Entropic inequalities.}
From a joint probability distribution~(\ref{jointpd}), one can define the
associated Shannon entropy
\begin{align*}
 H(& X_1 \dots X_{n}) \\
& = - \sum_{x_1,\ldots,x_{n}} p(x,\ldots,x_{n})\log_{2} p(x_1,\ldots,x_{n})
\end{align*}
More generally, marginalizing the joint distribution to any subset
$S\in\mathcal{M}$ of the observables gives a joint entropy
$H(X_S)$, where we write $X_S$ for the tuple of observables
$(X_i)_{i\in S}$. This joint entropy $H(X_S)$ is also defined in
any marginal model, since the distribution of $X_S$ is known for
$S\in\mathcal{M}$.

As has first been noticed in~\cite{CHSHentropic} and as we have
developed more formally in a general
framework~\cite{entropicpaper2}, the noncontextuality of $P$,
i.e.~the existence of a joint distribution~(\ref{jointpd}),
implies that the $H(X_S)$, for $S\in\mathcal{M}$, satisfy certain
inequalities which may be violated in some contextual models.

\begin{definition}
\label{defentineq}
An \emph{entropic contextuality inequality} is a linear inequality
in the $H(X_S)$ for $S\in\mathcal{M}$ which is satisfied whenever
$P$ is noncontextual. In the special case of a Bell scenario, we
use the term \emph{entropic Bell inequality}.
\end{definition}

If some marginal model violates a certain entropic contextuality
inequality, then this inequality has witnessed the contextuality
of the marginal model.

In Ref.~\cite{entropicpaper2}, we have classified the entropic
contextuality inequalities in the $n$-cycle marginal scenarios.
This family of scenarios is defined by starting with any number
$n\geq 3$ of observables $X_1,\ldots,X_{n}$ and assuming that
$X_i$ and $X_{i+1}$ are pairwise jointly measurable for all
$i=1,\ldots,n$, where we write $X_{n+1}=X_1$ for ease of notation.
No other pairs of observables are assumed jointly measurable, and
no triples of observables are assumed jointly measurable. For
$n=4$, Fig.~\ref{fig:context}a shows that this can be identified
with the CHSH Bell scenario~\cite{CHSH}. For $n=5$ (see
Fig.~\ref{fig:context}b) it is the marginal scenario considered by
Klyachko~\cite{KlyMarg,KSKlyachko}, and hence we call it the
\textit{Klyachko scenario}. For general $n$, it can be visualized
as an $n$-sided polygon (Fig.~\ref{fig:context}c). Our result
in~\cite{entropicpaper2} is that the inequalities derived
in~\cite{CHSHentropic} are a complete set of tight entropic
inequalities in these scenarios. (An entropic inequality is tight
when no other entropic inequality can be strictly better than this
one, so that a complete set of tight entropic inequalities
completely characterizes the region of noncontextual marginal
models in entropy space.) Stated more formally:

\begin{theorem}[\cite{entropicpaper2}]
\label{polygonthm}
A marginal model in this scenario is entropically noncontextual
if and only if the entropic inequality
\beq \label{polygon} \H{{i}{i+1}} \: + \sum_{j\neq\, i,\,i+1} \H{j} \: \leq \sum_{j\neq i} \H{{j}{j+1}}
\eeq 
holds for all $i=1,\ldots,n$.
\end{theorem}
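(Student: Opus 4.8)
The statement asserts the equality of two convex cones living in the space of \emph{observed} entropies $\{H(X_i)\}_i$ and $\{H(X_iX_{i+1})\}_i$ indexed by the cyclic marginal scenario $\mathcal{M}$: on one side the cone $\pi(\Gamma_n)$ obtained by projecting the Shannon cone $\Gamma_n$ (the cone cut out on all $2^n$ joint entropies $H(X_T)$, $T\subseteq\{1,\dots,n\}$, by monotonicity and submodularity~\cite{Yeung}) onto these coordinates, and on the other the polyhedral cone $P$ defined by~(\ref{polygon}) for every $i$. Because a marginal model supplies genuine distributions on each edge, its observed entropies automatically obey the elementary relations $0\le H(X_i)\le H(X_iX_{i+1})\le H(X_i)+H(X_{i+1})$, so these need not be listed and the whole content is the identity $\pi(\Gamma_n)=P$. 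The plan is to prove the two inclusions separately: necessity, $\pi(\Gamma_n)\subseteq P$, and sufficiency (completeness), $P\subseteq\pi(\Gamma_n)$.

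For necessity it suffices to exhibit each inequality~(\ref{polygon}) as a nonnegative combination of the elemental Shannon inequalities, since these hold throughout $\Gamma_n$ and hence survive the projection. The single building block is the entropic triangle inequality
\[
 H(X_a\mid X_c)\le H(X_a\mid X_b)+H(X_b\mid X_c),
\]
which follows from monotonicity ($H(X_a\mid X_c)\le H(X_aX_b\mid X_c)$) together with strong subadditivity, i.e.\ ``conditioning reduces entropy'' ($H(X_a\mid X_bX_c)\le H(X_a\mid X_b)$). Iterating it along the path complementary to the edge $\{X_i,X_{i+1}\}$, inserting $X_{i+2},X_{i+3},\dots$ one at a time, yields
\[
 H(X_{i+1}\mid X_i)\le \sum_{j\neq i} H(X_j\mid X_{j+1}),
\]
and substituting $H(X_j\mid X_{j+1})=H(X_jX_{j+1})-H(X_{j+1})$ and collecting the single-variable terms (using $X_{n+1}=X_1$) reproduces~(\ref{polygon}) exactly; the multi-variable conditional entropies appearing in the intermediate steps cancel, so only observed quantities remain. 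This is the chaining argument of~\cite{CHSHentropic}, now read as a statement about $\Gamma_n$, and cyclic relabeling gives all $i$.

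Sufficiency is the real work, and the main obstacle: one must show that \emph{no other} entropic inequality is needed, i.e.\ that every $h\in P$ is the projection of some point of $\Gamma_n$. Since $\Gamma_n$ is polyhedral, $\pi(\Gamma_n)$ is a closed convex cone contained in $P$ by the previous paragraph; hence it is enough to realize every extreme ray of $P$ as the projection of a genuinely entropic vector, for such a vector automatically lies in $\Gamma_n$ and convexity then forces $P\subseteq\pi(\Gamma_n)$. Concretely I would (i) determine the extreme rays of $P$ by exploiting the cyclic symmetry of the $2n$ defining inequalities, and (ii) for each ray construct explicit random variables $X_1,\dots,X_n$ whose observed entropies hit that ray --- deterministic, copy, and independence patterns along the cycle typically suffice. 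An equivalent systematic route is a direct Fourier--Motzkin elimination of the $2^n-2n-1$ unobserved coordinates from the elemental description of $\Gamma_n$, verifying that after elimination the only facets that survive (modulo redundancy and the trivial Shannon relations) are the $n$ inequalities~(\ref{polygon}).

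The difficulty in both routes is controlling the combinatorial growth for general $n$: the extreme-ray enumeration must be proved complete, or the elimination must be shown to produce nothing beyond~(\ref{polygon}). A useful organizing principle is that a marginal scenario whose compatibility graph is a tree carries no nontrivial entropic inequalities --- the pairwise data can always be glued into a global entropy vector by adhesion of adjacent pairs along their shared single variable --- so that all the nontriviality is created by the single independent cycle, and one expects precisely one family of facets, closing at each of the $n$ possible ``cut'' edges. Making this heuristic rigorous, by either carrying out the gluing explicitly or by completing the symmetry-reduced elimination, is the crux of the proof.
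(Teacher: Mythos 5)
First, note that the paper you are working from does not actually prove Theorem~\ref{polygonthm}: it is imported verbatim from the companion work~\cite{entropicpaper2}, so the only thing to compare against is the structure of the claim itself. Measured against that, your proposal correctly identifies the two halves. Your necessity argument is complete and correct: rewriting~(\ref{polygon}) as $H(X_{i+1}\mid X_i)\le\sum_{j\neq i}H(X_j\mid X_{j+1})$ and chaining the triangle inequality $H(X_a\mid X_c)\le H(X_a\mid X_b)+H(X_b\mid X_c)$ around the complementary path is exactly the Braunstein--Caves derivation, and it does exhibit~(\ref{polygon}) as a Shannon-type inequality, hence valid on the projection $\pi(\Gamma_n)$.

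The genuine gap is the completeness (``if'') direction, which you yourself flag as ``the real work'' and then do not carry out. Everything after ``Concretely I would\dots'' is a research plan, not a proof: you neither determine the extreme rays of the cone $P$ cut out by the $n$ inequalities~(\ref{polygon}) together with the trivial local relations, nor construct the random variables realizing them, nor perform (or control) the Fourier--Motzkin elimination of the $2^n-2n-1$ unobserved coordinates, nor make rigorous the tree-gluing heuristic that localizes all nontriviality in the single cycle. Each of these steps is nontrivial for general $n$: the extreme-ray enumeration must be proved exhaustive, the claim that ``deterministic, copy, and independence patterns typically suffice'' must be verified ray by ray, and the adhesion argument for trees needs an explicit construction of a global joint distribution (or at least a global entropy vector in $\Gamma_n$) from compatible pairwise data. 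There is also a definitional point you gloss over: ``entropically noncontextual'' refers to membership in the (closure of the) projection of the cone of entropy vectors of actual joint distributions, not merely of the Shannon cone $\Gamma_n$; your extreme-ray-realization route would handle this, but the Fourier--Motzkin route alone would only establish $\pi(\Gamma_n)=P$ and would still require an achievability argument. As it stands, only half of the theorem is proven.
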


In principle, one may also want to consider inequalities
containing derived entropic quantities like conditional entropies
and mutual information. However, since these derived quantities
are themselves nothing but linear combinations of joint entropies,
every entropic inequality containing the former can be rewritten
in terms of the latter. In fact, as there are no linear relations
between joint entropies, every linear entropic inequality turns
into a unique normal form when expressed in terms of joint
entropies~\cite[Sec.~13.2]{YeungBook}.

Besides this, the relevance of our general
framework~\cite{entropicpaper2} lies in the fact that the standard
computational geometry methods like Fourier-Motzkin elimination,
which have been extensively used to characterize tight Bell
inequalities in probability space, can also be applied to derive
tight (Shannon-type) entropic contextuality inequalities and tight
(Shannon-type) entropic Bell inequalities. Unfortunately, these
computations are very demanding: applying the methods described
in~\cite{entropicpaper2} to the tripartite Bell scenario with two
observables per party, we have not been able to fully characterize
the tight (Shannon-type) entropic Bell inequalities in this
scenario.

In the following, we analyze the case $n=4$ (the CHSH scenario)
and the case $n=5$ (the Klyachko scenario) in some more detail and
analyze in particular their violation by marginals models arising
from quantum theory.

\subsection{The CHSH scenario}
\label{subsec:entropic_CHSH}

In the usual CHSH scenario~\cite{CHSH}, there are two distant
parties $A$ and $B$, each measuring one of two observables
$A_{0},A_{1}$ and $B_{0},B_{1}$, respectively. Each of these
observables is taken to have two possible outcomes, so that the
set of outcomes can be taken to be $\{-1,+1\}$. As usual for Bell
scenarios, we take observables to be jointly measurable when they
belong to different parties. In our framework, this can be
described by the marginal scenario where the collection of jointly
measurable sets of observables is given by
\begin{align*}
& \emptyset, \: \{A_0\} ,\:\{A_1\} ,\: \{B_0\} ,\: \{B_1\} , \\
&\{A_0,B_0\} ,\: \{A_0,B_1\} ,\: \{A_1,B_0\} ,\: \{A_1,B_1\} .
\end{align*}
This is illustrated in Fig.~\ref{polygons}a. The CHSH inequality
\cite{CHSH}
\begin{align}
\begin{split}
\label{corrCHSH}
& CHSH \\
&\phantom{x} =\langle A_0B_0\rangle+\langle A_0B_1\rangle+\langle
A_1B_0\rangle-\langle A_1B_1\rangle  \leq 2
\end{split}
\end{align}
together with its equivalent variants is a necessary and
sufficient condition for noncontextuality (i.e.~Bell locality) in
this scenario.

By taking $n=4$ and renaming the observables occuring
in~(\ref{polygon}), we obtain the entropic
inequality~\cite{CHSHentropic}
\begin{align}
\begin{split}
\label{entCHSH}
H(A_1 B_1) & + H(A_0) + H(B_0) \\
& \leq  H(A_0 B_0) + H(A_0 B_1) + H(A_1 B_0) .
\end{split}
\end{align}
In order to emphasize the similarity with~(\ref{corrCHSH}), we can rewrite this in terms of mutual information as
\begin{align*}
I(A_0:B_0) + I(A_0:B_1) + I(A_1:&B_0)  - I(A_1:B_1) \\
 & - H(A_0) - H(B_0) \leq 0 .
\end{align*}
The last two terms in the left-hand side are analogous to the
classical bound of $2$ in~(\ref{corrCHSH}). We abbreviate the
left-hand side by $CHSH_E$, so that the \textit{entropic CHSH
inequality} is
$$
CHSH_E \leq 0 .
$$
Note that the entropic CHSH inequality does not only apply to $\pm
1$-valued observables $A_x$ and $B_y$; it is completely irrelevant
how many outcomes each observable has---as long as the Shannon
entropies converge---and which outcomes these are.

\begin{figure} [t!]
\begin{center}
\includegraphics[width=0.8\linewidth]{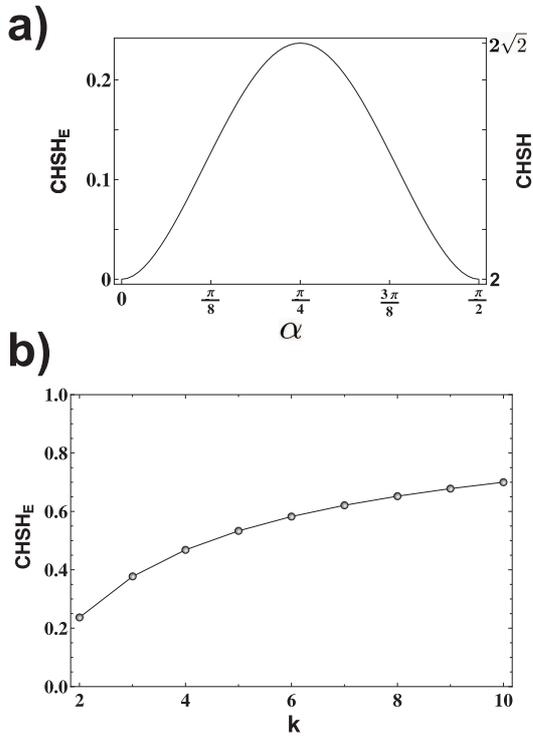}
\caption{(Color online) {\bf a.} Violation profile of the entropic
CHSH inequality for a pure state of the form $\cos \alpha
\ket{00}+\sin \alpha\ket{11}$. The violation profile essentially
coincides with the one of the standard CHSH inequality, although
the scales are different. The curve is obtained by optimizing over
the observables for each $\alpha$. {\bf b.} Maximal violation of
the entropic inequality (\ref{polygon}), achieved on a maximally
entangled state of two qubits ($\alpha= \pi /4$) and $k=2,...,10$
measurement settings for each party. \label{fig:violationprofile}}
\end{center}
\end{figure}

\paragraph*{Quantum violations of $CHSH_E$.}

We now assume that the two parties $A$ and $B$ share a pure
entangled two-qubit state. For any such state, one can choose the
local bases such that it has the form \beq \label{twoqubitstate}
|\psi\rangle = \cos \alpha \ket{00}+\sin \alpha\ket{11}, \eeq for
some $\alpha\in(0,\pi/2)$. Writing $A_x$ and $B_y$ also for the
$\pm 1$-valued quantum observables measured on $|\psi\rangle$, we
obtain the joint distributions, written in terms of the standard
notation of conditional probabilities,
\begin{align}
\begin{split}
\label{quantumCHSH}
P(a&,b|x,y) \\
& = \left\langle\psi\bigg|\left(\frac{1+(-1)^a A_x}{2}\otimes
\frac{1+(-1)^b B_y}{2}\right) \bigg|\psi\right\rangle .
\end{split}
\end{align}
For ease of later notation, we stipulate that the observables $A_x$ and $B_y$ are $\pm 1$-valued, but the outcomes $a$, $b$ are $\{0,1\}$-valued.

As shown in~\cite{CHSHentropic}, quantum correlations of the
form~(\ref{quantumCHSH}) do indeed lead to violations
of~(\ref{entCHSH}). These violations witness the non-existence of
local hidden variable models for~(\ref{quantumCHSH}). Numerical
optimization shows that the maximal violation of~(\ref{entCHSH})
for quantum correlations of the above form is achieved when all
measurement settings lie in the $Y$-$Z$ plane of the Bloch sphere,
that is, for measurement operators $A_{x}$ and $B_{y}$ of the form
$\sin\theta \cdot \sigma_{y} + \cos\theta \cdot \sigma_{z}$. (One
could as well take them all to lie in the $X$-$Z$ plane; what is
important is that they lie in the \textit{same} plane.) The
maximal violation is obtained for the maximally entangled state
$\alpha= \pi /4$, on which one gets $CHSH_E\approx +0.237$.

For other values of $\alpha$, the maximal violation
of~(\ref{entCHSH}), when optimized over the measurements, follows
the exact same profile as for the standard correlator
inequality~(\ref{corrCHSH}) as can be seen in
Fig.~\ref{fig:violationprofile}a. However, the measurements that
maximize the violation of $CHSH$ are not the ones which give the
maximal violation of $CHSH_E$. In fact, we will see below that
those choices of observables which produce the maximal $CHSH$
value for a certain $\alpha$ do not violate~(\ref{entCHSH}). In
general, for the standard CHSH scenario, the violation of the
standard inequality~(\ref{corrCHSH}) is a necessary but not
sufficient condition for the violation of~(\ref{entCHSH}).

We have also considered the inequalities~(\ref{polygon}) for any
even $n=2k$ as entropic Bell inequalities as follows. When A
chooses between $\pm 1$-valued observables $A_1,\ldots,A_k$ and B
among $B_1,\ldots,B_k$, then~(\ref{polygon}) becomes applicable
upon taking $X_{2i}=B_i$ and $X_{2i-1}=A_i$.
Fig.~\ref{fig:violationprofile}b shows our numerical results for
the maximal violation of~(\ref{polygon}) on a two-qubit
state~(\ref{twoqubitstate}).

\paragraph*{No-signaling violations of $CHSH_E$.}

In the CHSH scenario, there is a special class of marginal models
known as \emph{isotropic boxes}, which we would now like to study.
To begin, the \emph{Popescu-Rohrlich box} (PR box)~\cite{PRbox} is
defined to be the marginal model
\begin{equation}
P^{\mathrm{PR}}(a,b|x,y)=\frac{1}{4}\left[  1+\left(  -1\right)  ^{a\oplus b\oplus xy}\right],
\label{Ppr}
\end{equation}
It is the unique marginal model which maximally
violates~(\ref{corrCHSH}). Similarly, the \emph{isotropic box}
with parameter $C\in[0,1]$ is defined to be the marginal model
\begin{equation}
P^{\mathrm{iso}}(a,b|x,y) = \frac{1}{4}\left[  1+ C\left(
-1\right)  ^{{a}\oplus {b}\oplus xy}\right]. \label{isodist}
\end{equation}
It corresponds to a probabilistic mixture of $P^{\mathrm{PR}}$
with weight $C$ and uniform white noise $P^w$ with weight $1-C$.

Equivalently, an isotropic box can be characterized by having
uniformly random marginals, that is
$$
\langle A_x\rangle=\langle B_y\rangle =0,
$$
together with
$$
\langle A_0B_0\rangle = \langle A_0B_1\rangle = \langle
A_1B_0\rangle = -\langle A_1B_1\rangle \geq 0.
$$
The parameter $C$ is determined from this by $C = \langle A_0B_0\rangle$. Its relation to the CHSH value of the box is
simply given by $CHSH(P^{\mathrm{iso}}) = 4C$.

Any marginal model in the CHSH scenario can be transformed into an
isotropic box through a local depolarization process, keeping the
$CHSH$ value~(\ref{corrCHSH}) invariant~\cite{Masanes}. Therefore,
for many purposes it is enough to consider isotropic boxes only.

Interestingly, no isotropic box violates the entropic CHSH
inequality. In particular, this applies to the PR-box, although it
maximally violates the standard CHSH inequality. One can
understand this by noting that entropy only probes the probability
values occurring in a distribution, but not which probability
values get assigned to which outcomes. This means that the PR-box
is, as far as the entropies are concerned, equivalent to the
marginal model $P^c$ describing classical correlations,
\begin{equation}
\bigskip P^{c}(a,b|x,y)=\frac{1}{4}\left[  1+\left(  -1\right)  ^{{a}\oplus {b}}\right] \label{Pc}.
\end{equation}
Entropic quantities cannot distinguish between the perfect
anti-correlation of $A_1$ and $B_1$ as it appears in
$P^{\mathrm{PR}}$, and the perfect correlation of $A_1$ and $B_1$
as in $P^c$. All joint entropies of the PR-box coincide with those
of $P^c$.

For two-outcome measurements, the maximal violation
of~(\ref{entCHSH}) is $+1$ for the following reason: any marginal
model with two-outcome measurements will satisfy $H(A_0)\leq
H(A_0B_0)$ and $H(B_0)\leq H(A_1B_0)$; similarly, $H(A_1B_1)\leq
1+H(B_1)\leq H(A_0B_1)+1$. Taking these inequalities together
shows that $CHSH_E\leq 1$ for any such marginal model in the CHSH
scenario. This bound on the violation can indeed by achieved by
the no-signaling box
$$
P^{\mathrm{max}} = \tfrac{1}{2}P^{PR}+\tfrac{1}{2} P^{c}  ,
$$
which is an equal mixture of the PR-box with classical
correlations. $P^{\mathrm{max}}$ can be understood as the
probabilistic model in which each of the three pairs $(A_0,B_0)$,
$(A_0,B_1)$ and $(A_1,B_0)$ displays perfect correlation, while
the fourth pair $(A_1,B_1)$ is uncorrelated; see
also~\cite[Prop.~4.3]{entropicpaper2}. Note that $P^{\max}$
achieves a value of $3$ on $CHSH$, and therefore does not have a
quantum-mechanical realization since it is beyond Tsirelson's
bound of $2\sqrt{2}$.

This example shows that a convex combination of two non-violating marginal models may violate an entropic contextuality inequality. This
highlights the strongly non-linear character of entropic inequalities (see also Fig.~\ref{fig:entropie_and_distillation}).

\paragraph*{Discussion.}

It is a basic feature of Shannon entropy that the entropy of a
probability distribution is invariant under permutations of the
sample space. From this point of view, we find it surprising that
the entropic inequality~(\ref{entCHSH}) can be violated at all.
Violations of entropic contextuality inequalities witness a very
particular kind of contextuality: if a probabilistic model
violates an entropic inequality, then so does every other
probabilistic model obtained by permuting the outcome
probabilities of a joint measurement, provided that the permuted
joint distribution has the same marginals. For example, this leads
to the phenomenon observed above that the PR-box $P^{\mathrm{PR}}$
is entropically indistinguishable from classical correlation
$P^c$.

Along similar lines, the only
symmetry operations that can be applied in order to transform an entropic contextuality inequality
into an equivalent one are permutations of the observables which
map jointly measurable sets to jointly measurable sets. In the
case of a Bell scenario, these permutations are either
permutations of the parties, permutations of the observables of
some party, or arbitrary combinations thereof. Relabelings of the
outcomes of an observable do not change the inequality, again due to the fact that
entropies are invariant under outcome permutations.

\subsection{The Klyachko scenario}
\label{subsec:entropic_context}

A very simple state-dependent proof of the Kochen-Specker theorem
with only five two-outcome observables was given by Klyachko et
al.~in~\cite{KlyMarg,KSKlyachko}. The marginal scenario in this
case is the one depicted in figure~\ref{fig:context}b: there are
five $\pm 1$-valued observables $X_1,X_2,X_3,X_4,X_5$ such that
$X_{i}$ and $X_{i+1}$ (modulo 5) are compatible.

The so-called Klyachko inequality is a necessary and sufficient
condition for noncontextuality in this
scenario~\cite{KlyMarg,KSKlyachko}. It is given by
\begin{equation}
K_{5}=\sum_{i=1}^5\langle X_iX_{i+1}\rangle\geq
-3\label{KlyachkoCorr}
\end{equation}

Due to Theorem~\ref{polygonthm}, the only non-trivial entropic
inequality in the Klyachko scenario is given by, up to cyclic
permutations of the observables, the \emph{entropic Klyachko
inequality}
\begin{align}
\begin{split}
\label{ke}
H&(X_1  X_5) + H(X_2) + H(X_3) + H(X_4) \\
 & - H(X_1X_2) - H(X_2X_3) - H(X_3X_4) - H(X_4X_5) \leq 0
\end{split}
\end{align}

To investigate the quantum violations of the entropic Klyachko
inequality, we choose two-outcome observables on $\C^3$ of the
form
\begin{equation}
X_{i}=2|v_{i}\rangle\langle v_i|-\mathbbm{1}
\end{equation}
with the vectors $|v_{i}\rangle$ given by
\begin{align*}
\begin{split}
\label{v04}
|v_{1}\rangle  & =\left(  0,0,1\right)  \\
|v_{2}\rangle   & =\left(  \sin\theta,\cos\theta,0\right)  \\
|v_{3}\rangle   & =\mathcal{N}^{-1}\left(  \cos\theta\sin\phi,-\sin\theta\sin\phi,\sin\theta\cos\phi\right)  \\
|v_{4}\rangle   & =\left(  0,\cos\phi,\sin\phi\right)  \\
|v_{5}\rangle   & =\left(  1,0,0\right) ,
\end{split}
\end{align*}
where the normalization factor is
$\mathcal{N}=\sqrt{\sin^{2}\theta+\cos^{2}\theta\sin^{2}\phi}$. Up
to choice of basis and multiplying the $|v_i\rangle$ by irrelevant
phases, every configuration of $5$ unit vectors
$|v_1\rangle,\ldots,|v_5\rangle\in\R^3$ with $|v_i\rangle$
orthogonal to $|v_{i+1}\rangle$ is of this form.

Since each $|v_i\rangle$ is orthogonal to $|v_{i+1}\rangle$, the
observable $X_i$ commutes with $X_{i+1}$, so that these two
observables are compatible and we can talk about their joint
measurement. Also thanks to orthogonality, their joint outcome
$(X_i=1,X_{i+1}=1)$ never occurs.

We write $|v_i\times v_{i+1}\rangle$ for a unit vector orthogonal
to both $|v_i\rangle$ and $|v_{i+1}\rangle$.

Upon measuring these observables on some initial state
$|\psi\rangle\in\C^3$, the non-vanishing joint outcome
probabilities are given by
\begin{align}
\begin{split}
P(0,1|X_i,X_{i+1})  & = P(1|X_{i+1}) = |\langle v_{i+1}|\psi\rangle|^2 \\
P(1,0|X_i,X_{i+1})  & = P(1|X_i) =  |\langle v_{i}|\psi\rangle|^2 \\
P(0,0|X_i,X_{i+1})  & = 1 - P(1|X_i) - P(1|X_{i+1}) \\
 & = |\langle v_i\times v_{i+1}|\psi\rangle|^2 .
\end{split}
\end{align}

\paragraph*{Numerical results.}

Numerical calculations show that the maximal qutrit violation
of~(\ref{ke}) with observables $X_i$ occurs on a qutrit state of
the form
\begin{equation}
\label{kestate} \left\vert \psi\right\rangle =
\frac{1}{\sqrt{1+\sin^{2}\alpha}} \left(
\sin\alpha,\cos\alpha,\sin\alpha\right)
\end{equation}
with $\alpha\approx 0.29736$ and $\theta=\phi\approx 0.24131$, for
which the left-hand side of~(\ref{ke}) is $\approx +0.091$.

\paragraph*{Analytical proof of quantum violations.}

We would like to present an analytic proof showing that quantum
violations of~(\ref{ke}) occur with the $X_i$ and $|\psi\rangle$
of the above form. We set $\theta=\phi$ and $\alpha=2\phi$ and
expand everything for $\phi\ll 1$. Then, by symmetry, the joint
outcome distribution of $X_4$ and $X_5$ coincides with the one of
$X_2$ and $X_1$; likewise, the joint outcome distribution of $X_3$
and $X_4$ coincides with the one of $X_3$ and $X_2$; the remaining
joint distributions, up to $O(\phi^3)$, are listed in
Table~\ref{keav}.

With this symmetry,~(\ref{ke}) is equivalent to
\begin{align}
\begin{split}
\label{kesym}
  H(X_1X_5) & + 2H(X_2) + H(X_3) \\
& - 2H(X_1X_2) - 2H(X_2X_3) \leq 0 .
\end{split}
\end{align}

\begin{table}
\begin{tabular}{c|cccc}
 & $(0,1)$ & $(1,0)$ & $(0,0)$ \\
\hline\\
$X_1,X_5$ & $4\phi^2$ & $4\phi^2$ & $1-8\phi^2$  \\
$X_1,X_2$ & $1-5\phi^2$ & $4\phi^2$ & $\phi^2$ \\
$X_2,X_3$ & $\frac{9}{2}\phi^2$ & $1-5\phi^2$ &
$\frac{1}{2}\phi^2$
\end{tabular}
\caption{Joint outcome probabilities for the analytic violation.
All values are up to $O(\phi^3)$.} \label{keav}
\end{table}

The corresponding relevant entropy values are given by
\begin{align*}
H(X_1) &= - 4\phi^2\log(\phi^2) + O(\phi^2) \\
H(X_2) &= - 5\phi^2\log(\phi^2) + O(\phi^2)\\
H(X_3) &= - \frac{9}{2}\phi^2 \log(\phi^2) + O(\phi^2)\\
H(X_1X_2) &= - 5\phi^2\log(\phi^2) + O(\phi^2)\\
H(X_2X_3) &= - 5\phi^2\log(\phi^2) + O(\phi^2)\\
H(X_1X_5) &= - 8\phi^2\log(\phi^2) + O(\phi^2)\\
\end{align*}
With this, the left-hand side of~(\ref{kesym}) is
$-\frac{5}{2}\phi^2\log(\phi^2) + O(\phi^2)$, which is positive
for small enough $\phi$.

\paragraph*{Detection inefficiencies: single-detector model.}

One can take advantage of the fact that entropic inequalities can
handle any finite number of outcomes and use the same approach to
investigate the more realistic case with detection inefficiencies.
We will consider two possible scenarios: one where compatible
observables are measured jointly (one detector), and one where
compatible observables are measured sequentially (two detectors).

In the single-detector model with detection efficiency
$\eta\in[0,1]$, there is an additional outcome
$(\emptyset,\emptyset)$ which represents the no-click event of the
detector for each jointly measurable pair $(X_i,X_{i+1})$. The new
outcome probabilities $P^\eta$ are given by
\begin{align}
\begin{split}
P^\eta(x_i,x_{i+1}|X_i,X_{i+1})  & =\eta P(x_i,x_{i+1}|X_i,X_{i+1}) \\
P^\eta(\emptyset,\emptyset|X_i,X_{i+1})  & =1-\eta ,
\end{split}
\label{Psingledetector}
\end{align}
where the measurments are now $\{-1,+1,\emptyset\}$-valued, and
$x_i,x_{i+1}\in\{-1,+1\}$ are the ``proper'' outcomes. In this
model, a no-detection event always occurs for both observables
simultaneously.

The joint probabilities~(\ref{Psingledetector}) marginalize to the
single-observable distributions
\begin{align}
\begin{split}
P(x_i|X_i)  & =\eta P(x_i|X_i)\\
P(\emptyset|X_i)  & =1-\eta.
\end{split}
\label{singlemarginal}
\end{align}

\begin{prop}
With this model of inefficiencies,
\begin{align*}
H^\eta(X_i) &= \eta H(X_i) + h(\eta) \\
H^\eta(X_iX_{i+1}) &= \eta H(X_iX_{i+1}) + h(\eta).
\end{align*}
where $h(\eta) = -\eta\log\eta - (1-\eta)\log(1-\eta)$ is the
binary entropy.
\end{prop}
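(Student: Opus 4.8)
The plan is to recognize that both identities are instances of a single grouping (or branching) property of Shannon entropy, applied once to the single-observable distribution and once to the pair distribution. In each case the inefficient distribution $P^\eta$ has exactly the same shape: one distinguished ``no-click'' outcome carrying probability $1-\eta$, while every genuine outcome $x$ of the underlying ideal distribution $P$ is rescaled to probability $\eta P(x)$. Since the ideal probabilities sum to one, these rescaled values together with the no-click weight sum to $\eta + (1-\eta) = 1$, so $P^\eta$ is normalized; this is essentially the only structural fact I will need.

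First I would isolate the grouping lemma in the abstract: if a random variable $Y$ takes the no-click value with probability $1-\eta$ and each proper value $x$ with probability $\eta P(x)$, then $H(Y) = h(\eta) + \eta H(P)$, where $H(P)$ denotes the Shannon entropy of the ideal distribution. The cleanest route is the chain rule. I would introduce the Bernoulli indicator $Z$ recording whether the detector fired; because $Z$ is a deterministic function of $Y$ (it equals $0$ precisely on the no-click outcome), one has $H(Y) = H(Y,Z) = H(Z) + H(Y\mid Z)$. Here $H(Z) = h(\eta)$, the conditional entropy given no click vanishes since $Y$ is then deterministic, and the conditional entropy given a click equals $H(P)$ because conditioning on the click renormalizes $\eta P(x)$ back to $P(x)$; hence $H(Y\mid Z) = \eta H(P)$. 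Equivalently one may compute directly, starting from
\[
-\sum_x \eta P(x)\log\bigl(\eta P(x)\bigr) - (1-\eta)\log(1-\eta),
\]
splitting the logarithm as $\log\eta + \log P(x)$ and using $\sum_x P(x)=1$ to peel off the terms $-\eta\log\eta$, $\eta H(P)$, and $-(1-\eta)\log(1-\eta)$.

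Second I would simply specialize. For the single-observable case, $P$ is the marginal $P(x_i\mid X_i)$ of~(\ref{singlemarginal}), and the lemma yields $H^\eta(X_i) = \eta H(X_i) + h(\eta)$. For the joint case, the relevant distribution ranges over the proper pairs $(x_i,x_{i+1})$ with the single combined no-click outcome $(\emptyset,\emptyset)$ of probability $1-\eta$ as prescribed by~(\ref{Psingledetector}); applying the identical lemma gives $H^\eta(X_iX_{i+1}) = \eta H(X_iX_{i+1}) + h(\eta)$.

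I do not expect a genuine obstacle: the statement is a direct consequence of the grouping property of entropy. The one point worth checking carefully is that the single-detector model collapses both per-observable no-detection flags into one joint event $(\emptyset,\emptyset)$ rather than two independent ones, so that precisely one extra symbol of probability $1-\eta$ is appended in each of the two entropies. Definition~(\ref{Psingledetector}) guarantees exactly this, so the joint distribution really does have the required ``one distinguished outcome plus $\eta$-rescaled ideal outcomes'' shape and the lemma applies verbatim.
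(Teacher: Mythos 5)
Your proposal is correct and follows the same route as the paper, which simply invokes the grouping rule of Shannon entropy; your chain-rule derivation via the click indicator $Z$ (and the equivalent direct splitting of $\log(\eta P(x))$) is just an explicit writing-out of that rule. The point you flag at the end---that the single-detector model appends exactly one no-click symbol $(\emptyset,\emptyset)$ of weight $1-\eta$ to each joint distribution---is indeed the only structural fact needed, and it is guaranteed by~(\ref{Psingledetector}).
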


\begin{proof}
This follows from an application of the grouping rule of Shannon
entropy, see e.g.~\cite[Sec.~2.179]{CT}.
\end{proof}

Upon plugging these equations into~(\ref{ke}), one finds that the
contributions of $h(\eta)$ cancel, so that the left-hand side
simply scales as a linear function of $\eta$. Therefore, the
entropic contextuality inequality~(\ref{ke}) has violations for
any $\eta \geq 0$. Moreover, the maximal violation with qutrit
measurements of the form~(\ref{ke}),~(\ref{v04}) is given by the
same state and vectors $|v_i\rangle$ which maximizes the violation
for $\eta=0$ (which are $\alpha\approx 0.29736$ and
$\theta=\phi\approx 0.24131$).

\paragraph*{Detection inefficiencies: two-detector model.}

We now assume that the joint measurement of $X_i$ and $X_{i+1}$ is
realized by one detector measuring $X_i$ and another detector
measuring $X_{i+1}$. Again, we take each detector to have an
efficiency of $\eta\in[0,1]$, for simplicity the same value for
all $5$ detectors, such that the no-click event of the first
detector is independent of the no-click event of the second
detector. A physical situtation leading to this kind of model may
be a sequential scheme where the system passes through a
non-demolition measurement in the first detector before reaching
the second detector.

Consequently, the jointly measurable pair $(X_{i},X_{i+1})$ has an
outcome distribution, with $x_i\in\{-1,+1\}$,
\begin{align}
\begin{split}
P^\eta(x_i,x_{i+1}|X_i,X_{i+1})  & =\eta^{2}P(x_i,x_{i+1}|X_i,X_{i+1})\\
P^\eta(x_i,\emptyset|X_i,X_{i+1})  & =\left(  1-\eta\right)  \eta P(x_i|X_i)\\
P^\eta(\emptyset,x_{i+1}|X_i,X_{i+1})  & =\left(  1-\eta\right)  \eta P(x_{i+1}|X_{i+1})\\
P^\eta(\emptyset,\emptyset|X_i,X_{i+1})  & =\left(  1-\eta\right)
^{2}
\end{split}
\end{align}
which again marginalize to the single-observable
distributions~(\ref{singlemarginal}).

\begin{prop}
With this model of inefficiencies,
\begin{align}
\label{ineff2}
H^\eta(X_i) =& \eta H(X_i) + h(\eta) \\
H^\eta(X_iX_{i+1}) =& \eta^2 H(X_iX_{i+1})  \nonumber\\
 & + \eta(1-\eta) \left[ H(X_i) + H(X_{i+1}) \right] + h(\eta). \nonumber
\end{align}
where $h(\eta) = -\eta\log\eta - (1-\eta)\log(1-\eta)$ is the
binary entropy.
\end{prop}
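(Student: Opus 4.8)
The plan is to obtain both identities by a direct computation of the Shannon entropy, using the grouping (chain) rule exactly as in the single-detector case. The first identity needs no new work: as already noted, the two-detector joint distribution marginalizes over $x_{i+1}$ to the single-observable distribution (\ref{singlemarginal}), which is identical to the one appearing in the single-detector model. Hence $H^\eta(X_i)=\eta H(X_i)+h(\eta)$ is literally the single-detector identity, and follows from the preceding proposition.

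For the joint entropy I would partition the outcome space of the pair according to which of the two detectors fired: both fired (total weight $\eta^2$), only the first (weight $\eta(1-\eta)$), only the second (weight $(1-\eta)\eta$), and neither (weight $(1-\eta)^2$). Conditioned on ``both fired'' the distribution is the proper joint distribution $P(x_i,x_{i+1}|X_i,X_{i+1})$, contributing $H(X_iX_{i+1})$; conditioned on ``only the first'' it is $P(x_i|X_i)$, contributing $H(X_i)$; conditioned on ``only the second'' it is $P(x_{i+1}|X_{i+1})$, contributing $H(X_{i+1})$; and ``neither'' is the deterministic outcome $(\emptyset,\emptyset)$, contributing nothing. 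The grouping rule then expresses $H^\eta(X_iX_{i+1})$ as the entropy of the four block weights plus the weighted sum of these conditional entropies, which immediately yields the terms $\eta^2 H(X_iX_{i+1})+\eta(1-\eta)[H(X_i)+H(X_{i+1})]$.

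The step I expect to require the most care is the accounting of the binary-entropy contribution. Because the two detectors fire independently, the entropy of the block weights is the joint entropy of two independent Bernoulli($\eta$) random variables, and one must collect the $\log\eta$ and $\log(1-\eta)$ terms arising from all four blocks and verify that they assemble into the correct multiple of $h(\eta)$. This is the one place where a factor is easily miscounted, so I would confirm it in two ways: via the grouping rule as above, and via a direct $-\sum p\log p$ expansion of the full distribution as an independent cross-check. The remainder is routine substitution.
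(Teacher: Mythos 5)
Your approach is the same as the paper's: the paper's proof is a one-line appeal to the grouping rule, and your four-block decomposition (both detectors fire, only the first, only the second, neither) with conditional entropies $H(X_iX_{i+1})$, $H(X_i)$, $H(X_{i+1})$ and $0$ is exactly the right way to instantiate that rule. Your treatment of the first identity via the marginal distribution is also correct.

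However, the one step you explicitly defer --- ``verify that they assemble into the correct multiple of $h(\eta)$'' --- is precisely where the proof is incomplete, and completing it does not reproduce the stated formula. As you yourself observe, the block weights $\bigl(\eta^2,\,\eta(1-\eta),\,(1-\eta)\eta,\,(1-\eta)^2\bigr)$ form the joint distribution of two independent Bernoulli($\eta$) variables, whose entropy is $2h(\eta)$, not $h(\eta)$. The grouping rule therefore gives
\begin{equation*}
H^\eta(X_iX_{i+1}) = \eta^2 H(X_iX_{i+1}) + \eta(1-\eta)\left[H(X_i)+H(X_{i+1})\right] + 2h(\eta),
\end{equation*}
and the direct $-\sum p\log p$ expansion you propose as a cross-check confirms the coefficient $2$: the $\log\eta$ terms collect to $-2\eta\log\eta$ and the $\log(1-\eta)$ terms to $-2(1-\eta)\log(1-\eta)$. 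A quick sanity check makes the discrepancy concrete: take $P$ deterministic and $\eta=1/2$; then $P^\eta$ is uniform over four outcomes, so $H^\eta(X_iX_{i+1})=2$ bits, whereas the right-hand side as printed gives $h(1/2)=1$ bit. So your decomposition is sound, but the proposal stops just short of the computation that matters; carried out, it shows the additive constant in the second identity must be $2h(\eta)$, i.e.\ the statement cannot be proved in the form given. You should finish the bookkeeping and commit to the constant you actually obtain rather than leaving it as ``the correct multiple''.
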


\begin{proof}
Again, this follows from an application of the grouping rule of
Shannon entropy~\cite[Sec.~2.179]{CT}.
\end{proof}

Due to the additional terms in~(\ref{ineff2}), the required
detection efficiency for witnessing quantum violations in the
two-detector model turns out be very high, $\eta\approx 0.995$.

\section{Nonlocality distillation and the entropic CHSH inequality}
\label{sec:distilation}

\begin{figure}[t!]
\begin{center}
\includegraphics[width=0.6\linewidth]{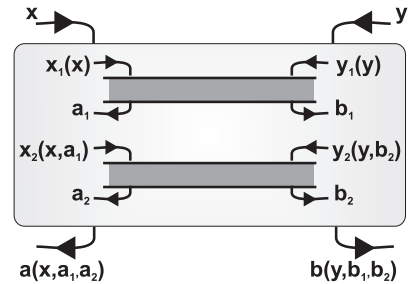}
\caption{(Color online) General nonlocality distillation protocol
with two boxes. The inputs for the second box may locally depend
on the output of the first one and the final output of the
combined boxes is a function of all inputs and outputs.
\label{fig:twoboxes}}
\end{center}
\end{figure}


As we have seen in the previous sections, finding violations of an
entropic contextuality inequality, for example of the entropic
CHSH inequality~(\ref{entCHSH}), is not easy, and there are few
quantum-mechanical models which do violate them. Therefore, we
regard the violation of an entropic contextuality inequality
(entropic Bell inequality) as a witness of a very strong form of
contextuality (nonlocality). So what does the violation of an
entropic contextuality inequality tell us about the violating
model? Can the violation be regarded as a resource for something?

Since Shannon entropy is an asymptotic quantity which measures the
effective size of a probability distribution on the level of many
copies (see for example the Asymptotic Equipartition
Property~\cite{YeungBook}), we also expect any answers to these
questions to be concerned with the limit of many copies.

In this section, we would like to consider the case of the CHSH
scenario and investigate a bit, on a purely phenomenological
level, one particular property which also has an asymptotic
flavor: the property of \emph{nonlocality distillation}. However,
we barely have a precise hypothesis to offer---let alone a
proof---and the similarities we will describe in the following may
easily turn out to be superficial.

Considering the CHSH scenario, it was recently shown that
nonlocality can be distilled \cite{Dist_Foster,
Dist_Brunner,Dist_Allcock, Dist_Hoyer, Dist_Cavalcanti}: by
locally processing several copies of certain bipartite
no-signaling boxes, one can increase the amount of nonlocality
according to a particular nonlocality measure. In the following,
we consider bipartite no-signaling boxes with binary inputs and
outputs. Each party can \emph{wire} boxes together using classical
circuitry to produce a new binary-input/binary-output box.

The general distillation protocol with two copies of a
no-signaling box is displayed in Fig. \ref{fig:twoboxes}. A
certain protocol distills the nonlocality if the no-signaling box
obtained after the wiring is more nonlocal---according to a
certain measure---than the original box. As a measure of
nonlocality, we will consider the \emph{nonlocal part} of the
\emph{EPR2 decomposition}~\cite{EPR2}, defined as follows. Any
no-signaling box $P$ can be decomposed into convex combinations of
a purely local box $P^L$ and another box $P^{NL}$,
\begin{equation}
P = (1-q) P^{L} + q P^{NL},
\end{equation}
with some coefficient $q\in[0,1]$. The minimal coefficient $q$,
when minimized over all such decompositions is the \emph{nonlocal
content} of $P$. By definition, computing the nonlocal content is
a linear program. In the case of the CHSH scenario, there is a
linear relationship between the nonlocal content and the violation
of the standard CHSH inequality~\cite{Dist_Cavalcanti}; therefore,
in the present context we may regard the $CHSH$ value as our
measure of nonlocality.

\begin{figure}[t!]
\begin{center}
\includegraphics[width=0.8\linewidth]{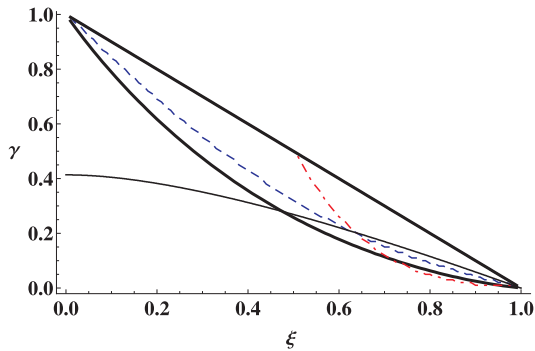}
\caption{(Color online) Slice of the no-signaling polytope
corresponding to the family of boxes~(\ref{prob_brunner}). The
quantum region and $CHSH_{E}$ violations are bounded respectively
by the thin black and thick black curves. The regions of
distillability by the wirings (\ref{wiring_foster}) and
(\ref{wiring_Cavalcanti}) correspond respectively to the
dashed-dotted red and dashed blue lines. The region of violations
of the entropic CHSH inequality is strictly larger than the region
of correlations distilable with the wiring
(\ref{wiring_Cavalcanti}). On the other hand, in a small part of
the quantum region, nonlocality can be distilled by the protocol
(\ref{wiring_foster}) while not being detected by $CHSH_{E}$.
\label{fig:entropie_and_distillation}}
\end{center}
\end{figure}

We now focus on two particular wirings. The first one was proposed
in \cite{Dist_Foster} and is given by 
\begin{equation}
\left\{
\begin{array}
[c]{c}
x_{1}=x,\: x_{2}=x,\: a=a_{1}\oplus a_{2}\\
y_{1}=y,\: y_{2}=y,\: b=b_{1}\oplus b_{2},
\end{array}
\right. \label{wiring_foster}
\end{equation}
while the second one was proposed in~\cite{Dist_Cavalcanti} and is
given by 
\begin{equation}
\left\{
\begin{array}
[c]{c}
x_{1}=x,x_{2}=x\oplus a_{1}\oplus1,a=a_{1}\oplus a_{2}\oplus1  \\
y_{1}=1,y_{2}=yb_{1},b=b_{1}\oplus b_{2}\oplus1.
\end{array}
\right. \label{wiring_Cavalcanti}
\end{equation}
As shown in~\cite{Dist_Cavalcanti}, these two wirings completely
characterize the distilability in the CHSH scenario on the
two-copy level.

For example, for the wiring~(\ref{wiring_foster}), the wired
no-signaling box is given by 
\begin{equation}
P'(a,b|x,y)  =
{\sum\limits_{\substack{a_{1}+a_{2}=a\\b_{1}+b_{2}=b}}} P\left(
a_{1},b_{1}|x,y\right)  P(a_{2},b_{2}|x,y),
\end{equation}
As for any wiring protocol with two copies of the original box,
the new box $P'$ is a quadratic function of the original one. So
the first (superficial) similarity between wirings and the
entropic CHSH inequality is nonlinearity.

Interestingly, in the case of isotropic boxes---which we found
in~\ref{subsec:entropic_CHSH} not to violate the entropic CHSH
inequality---distillation is not possible~\cite{Dist_Short}. More
generally, we have considered the family of boxes
\begin{equation}
P_{\gamma,\xi}=\gamma P^{PR}+\xi P^{c}+\left(1-\gamma-\xi\right)
P^{f}, \label{prob_brunner}
\end{equation}
with parameters $\gamma,\xi\in[0,1]$, $P^{PR}$ and $P^{c}$ given
respectively by~(\ref{Ppr}) and~(\ref{Pc}), as well as 
\begin{equation}
\bigskip P^{f}(a,b|x,y)=\frac{1}{8}\left[  2+\left(  -1\right)  ^{a\oplus b\oplus
xy}\right],
\end{equation}
which is $P^f=\tfrac{1}{2}P^{PR}+\tfrac{1}{2}P^{w}$, half-way
between the PR-box and white noise.

For $\gamma=0$, this is a family of boxes with $CHSH=0$. The
family~(\ref{prob_brunner}) forms a triangle in the no-signaling
polytope extending from the boundary of the local polytope up to
the PR-box. As displayed in
Fig.~\ref{fig:entropie_and_distillation}, the subset of boxes
violating $CHSH_E$ and the subset of boxes distillable
via~(\ref{wiring_Cavalcanti}) or~(\ref{wiring_foster}) are very
similar. We expect that by going beyond the two-copy level and
considering all possible wiring protocols, the subset of
distillable boxes will enlarge. It seems reasonable to ask whether
it will contain all boxes violating $CHSH_E$. In general, could
the violation of $CHSH_E$ be a sufficient condition for
distillability?

\begin{figure}[t!]
\begin{center}
\includegraphics[width=0.8\linewidth]{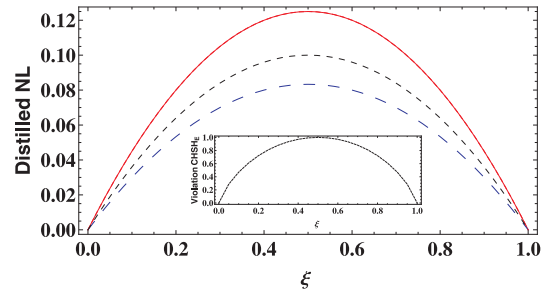}
\caption{(Color online) Distilled nonlocality (increase in
nonlocal content) of the family~(\ref{dfamily}) upon application
of the wiring (\ref{wiring_us}) for $d=2,4$ (red full line), $d=3$
(blue large dashed line) and $d=5$ (black dashed line) as computed
by linear programming. The inset shows the violation of
$CHSH_{E}$, which does not depend on $d$.
\label{fig:entropie_and_distillation2}}
\end{center}
\end{figure}

Moreover, we have investigated the bipartite scenario with two
$d$-outcome observables per party, to which $CHSH_{E}$ still
applies. We have considered the family of no-signaling boxes
\begin{equation}
\label{dfamily} P_{\xi}=\xi P^{PR}_{d}+(1-\xi) P^{c}_{d},
\end{equation}
where $P^{PR}_{d}$ is the generalized PR-box~\cite{BLMPSR}
\begin{equation}
P^{PR}_{d}(a,b|x,y) = \left\{
\begin{array}{cl}
1/d & \text{if }a-b\equiv xy \mod d\\
0 & \text{otherwise}%
\end{array}
\right. ,\label{Pprd}
\end{equation}
while $P^{c}_{d}$ is the classically correlated box
\begin{equation}
P^{c}_{d}(a,b|x,y) = \left\{
\begin{array}{cl}
1/d & \text{if }a=b\\
0 & \text{otherwise}%
\end{array}
\right. .\label{Pcd}
\end{equation}
It follows from the definition~(\ref{Pcd}) and an application of
the CGLMP inequality~\cite{CGLMP} that the nonlocal content of
$P_{\xi}$ is simply $\xi$. On the other hand, $CHSH_{E}=-\xi
\log{\xi}-(1-\xi) \log{1-\xi}$ turns out to be the binary entropy.

To probe the nonlocality distillation of $P_{\xi}$, we have
considered the wiring 
\begin{equation}
\left\{
\begin{array}[c]{c}
x_{1}=x,\: x_{2}=xa_{1}\operatorname{mod}2,\: a=(a_{1}+a_{2})\operatorname{mod}d\\
y_{1}=y,\: y_{2}=yb_{1}\operatorname{mod}2,\:
b=(b_{1}+b_{2})\operatorname{mod}d
\end{array}
\right. , \label{wiring_us}
\end{equation}
which can be regarded as a generalization of the wiring proposed
in Ref.~\cite{Dist_Brunner}.
Fig.~\ref{fig:entropie_and_distillation2} shows the increase of
nonlocal content achievable with this protocol for any
$d=2,\dots,5$ and the value of $CHSH_{E}$; we observe
qualitatively identical behavior.

\section{Entropic Bilocality inequalities}
\label{sec:entropic_bilocal}

\begin{figure}[t!]
\begin{center}
\includegraphics[width=0.6\linewidth]{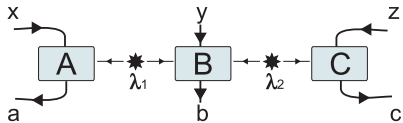}
\caption{(Color online) Swapping experiment scenario, where two
independent entangled states are prepared and sent to three
distant parts. If the central part B measures in a entangled
basis, the subsystems A and C, that had no previous contact,
become nonlocally correlated. \label{fig:swapping}}
\end{center}
\end{figure}

\paragraph*{Bilocality scenarios.}

In entanglement swapping~\cite{swapping}, there are two sources of
entangled quantum states. The first source (on the left in
Fig.~\ref{fig:swapping}) sends half of its entangled state to A
and the other half to B; the second source (on the right in
Fig.~\ref{fig:swapping}) sends half to B and half to C. If B
applies the right kind of entangled measurement between the two
quantum systems which he receives, then, conditioned on an outcome
of this measurement, the post-measurement state between A and C
will be entangled.

This idea has been used to obtain strong bounds~\cite{swapping2}
on the existence of local hidden variable models of entangled
quantum states augmented by the assumption that the hidden
variables $\lambda_1$ and $\lambda_2$ describing the two sources
are probabilistically independent; this is the \textit{bilocality}
assumption.

Under the assumption of local realism without the bilocality
assumption, the conditional probabilities for the scenario of
Fig.~(\ref{fig:swapping}) would have the form
\begin{align}
\begin{split}
& P\left(  a,b,c|x,y,z\right)  \\
& = {\displaystyle\sum\limits_{\lambda_{1},\lambda_{2}}} \rho(
\lambda_{1},\lambda_{2}) P(a|x,\lambda_{1})P(b|y,\lambda
_{1},\lambda_{2})P(c|z,\lambda_{2})
\end{split}
\end{align}
where $\rho(\lambda_1,\lambda_2)$ is the probability distribution
over the pairs of hidden variables. As usual,
$\left\{x,y,z\right\}$ and $\left\{a,b,c\right\}$ describe,
respectively, the inputs and outputs at each local part. Since the
two systems received by Bob are being jointly measured, they can
be treated as a single entity.

Since $\rho$ can be chosen such that $\lambda_1=\lambda_2$ occurs
with probability $1$, such a model is equivalent to one of the
form
\begin{align}
\begin{split}
\label{threelocal}
& P\left(  a,b,c|x,y,z\right) \\
& = {\displaystyle\sum\limits_{\lambda}} \rho\left( \lambda\right)
P(a|x,\lambda)P(b|y,\lambda)P(c|z,\lambda) ,
\end{split}
\end{align}
which is the standard description of local realism in a
three-party Bell scenario.

The bilocality assumption in addition imposes independence of
$\lambda_1$ and $\lambda_2$, so that the distribution $\rho$ is
required to factor as $ \rho\left( \lambda_{1},\lambda_{2}\right)
=\rho_1(\lambda_{1})  \rho_{2}(\lambda_{2})$, which means that a
\emph{bilocal model} is one of the form
\begin{align}
\begin{split}
\label{bilocaldecom}
& P\left(  a,b,c|x,y,z\right)  \\
& = {\displaystyle\sum\limits_{\lambda_{1},\lambda_{2}}}
\rho_1(\lambda_{1})\rho_2(\lambda_{2})
P(a|x,\lambda_{1})P(b|y,\lambda
_{1},\lambda_{2})P(c|z,\lambda_{2}) .
\end{split}
\end{align}
A direct calculation shows that every such model satisfies,
besides the usual no-signaling equations, also the condition \beq
\label{binosig} \sum_b P(a,b,c|x,y,z) = P(a|x)P(c|z) \quad\forall
a,c,x,y,z \eeq where $P(a|x)$ and $P(c|z)$ are the marginal
behaviors of A and C, respectively.

However, there are no-signaling boxes $P(a,b,c|x,y,z)$ which
satisfy~(\ref{binosig}), but nevertheless are not bilocal,
i.e.~cannot be written in the form~(\ref{bilocaldecom}). Many
examples of this form are even local boxes of the
form~(\ref{threelocal}). We will soon see more concrete examples
of this in which the ``non-bilocality'' can be witnessed by an
entropic inequality.

\paragraph*{Entropic bilocality inequalities.}

Due to the nonlinearity of the bilocality condition
$\rho(\lambda_1,\lambda_2)=\rho_1(\lambda_1)\rho_2(\lambda_2)$,
the set of $P(a,b,c|x,y,z)$ of the form~(\ref{bilocaldecom}) is
not convex~\cite{swapping2}, and it is difficult to determine
whether a given $P$ lies in this bilocal set or not. This is where
entropic inequalities enter: as we will show in the following,
they give necessary requirements for $P$ to be bilocal in terms of
inequalities \textit{linear} in the entropies of $P$. This
linearity is already visible on the level of the two sources,
whose probabilistic independence is equivalent to vanishing mutual
information, $I(\lambda_{1}:\lambda_{2})=0$, which is a linear
entropic equation
$$
H(\lambda_1\lambda_2) = H(\lambda_1) + H(\lambda_2) .
$$

For the sake of concreteness, we consider the specific scenario
where A and C have two available measurement settings to choose
from, whereas B always applies the same fixed measurement, so that
$x,z\in\{0,1\}$ and $y=0$. This corresponds to $5$ observables
$A_0,A_1,B,C_0,C_1$. A subset of these $5$ observables is jointly
measurable whenever it contains at most one observable of A and at
most one of B. Following~\cite{entropicpaper2}, one can visualize
the marginal scenario as in Fig.~\ref{simpcomplex}. In addition,
the observables $A_0$ and $A_1$ are independent of the observables
$C_0$ and $C_1$; therefore, in order to calculate the entropic
inequalities for this scenario following the procedure
of~\cite{entropicpaper2}, we need to consider the independence
constraint \beq \label{bilocalci} I(A_0A_1:C_0C_1) = 0, \eeq which
we write out in terms of joint entropies as \beq
\label{bilocalent} H(A_0A_1C_0C_1) = H(A_0A_1) + H(C_0C_1) . \eeq
Note that the data processing inequality implies that the
independence constraints
\begin{align}
\begin{split}
\label{bilocalci2}
I(A_x:C_0C_1) & = 0,\\
I(A_0A_1:C_z) & = 0,\\
I(A_x:C_z) & =0,
\end{split}
\end{align}
follow from~(\ref{bilocalci}).

\begin{figure}[t!]
\begin{center}
\includegraphics[width=0.4\linewidth]{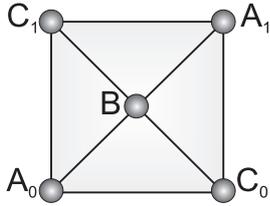}
\caption{The simplicial complex of jointly measurable observables
in the bilocality scenario discussed in the main text. Two
observables are jointly measurable whenever they share an edge;
three observables are jointly measurable whenever they are the
vertices of one of the four triangles.} \label{simpcomplex}
\end{center}
\end{figure}

We have used the computational approach of~\cite{entropicpaper2},
augmented by the independence constraints~(\ref{bilocalent})
and~(\ref{bilocalci2}) written out in terms of joint entropies, in
order to calculate all the (Shannon-type) entropic inequalities in
this bilocality scenario; although including the
constraints~(\ref{bilocalci2}) would not have been strictly
necessary, it helps in speeding up the computation. This
computation has resulted in $4$ equations and $52$ tight
inequalities. The $4$ equations are precisely the independence
conditions $I(A_x:C_z) = 0$. The $52$ inequalities fall into $10$
symmetry classes which we have listed in
Table~\ref{bilocalineqstable}. The first four inequalities are
trivial in the sense that they will hold for any no-signaling box
$P(a,b,c|x,y,z)$ satisfying~(\ref{binosig}), while the other six
inequalities can potentially be violated by non-bilocal boxes.

\begin{table*}
\begin{tabular}{|c|c c c c| c c c c| c c c c| c c c c| c
c c c| c c c c|} \hline
\multicolumn{25}{|c|}{Entropic Bilocality Inequalities}\\
\hline
      \multicolumn{1}{|c|}{\textbf{\#}}
    & \multicolumn{4}{|c|}{\scriptsize$H(A_x)$}
    & \multicolumn{4}{|c|}{\scriptsize$H(B)$}
    & \multicolumn{4}{|c|}{\scriptsize$H(C_z)$}
    & \multicolumn{4}{|c|}{\scriptsize$H(A_xB)$}
    & \multicolumn{4}{|c|}{\scriptsize$H(BC_z)$}
    & \multicolumn{4}{|c|}{\scriptsize$H(A_xBC_z)$}\\
\tiny{$x$ / $z$ / $xz$} & \text{\ }&\tiny{0}&\tiny{1}&\textbf{}&
\textbf{}&\textbf{}&\tiny{0}&\textbf{}& \text{\
}&\tiny{0}&\tiny{1}&\textbf{}& \text{\
}&\tiny{0}&\tiny{1}&\textbf{}&
\text{\ }&\tiny{0}&\tiny{1}&\textbf{}&
\tiny{00}&\tiny{01}&\tiny{10}&\tiny{11}
\\
\hline \textbf{1}& \text{\ }&\small{$-1$}&\small{0}&\textbf{}&
\textbf{}&\textbf{}&\small{$-1$}&\textbf{}& \text{\
}&\small{0}&\small{0}&\textbf{}& \text{\
}&\small{$1$}&\small{0}&\textbf{}&
\text{\ }&\small{0}&\small{0}&\textbf{}&
\small{0}&\small{0}&\small{0}&\small{0}
\\
\hline \textbf{2}& \text{\ }&\small{0}&\small{0}&\textbf{}&
\textbf{}&\textbf{}&\small{0}&\textbf{}& \text{\
}&\small{0}&\small{0}&\textbf{}& \text{\
}&\small{$1$}&\small{0}&\textbf{}&
\text{\ }&\small{0}&\small{0}&\textbf{}&
\small{$-1$}&\small{0}&\small{0}&\small{0}
\\
\hline \textbf{3}& \text{\ }&\small{$1$}&\small{0}&\textbf{}&
\textbf{}&\textbf{}&\small{0}&\textbf{}& \text{\
}&\small{$1$}&\small{0}&\textbf{}& \text{\
}&\small{0}&\small{0}&\textbf{}&
\text{\ }&\small{0}&\small{0}&\textbf{}&
\small{$-1$}&\small{0}&\small{0}&\small{0}
\\
\hline \textbf{4}& \text{\ }&\small{0}&\small{0}&\textbf{}&
\textbf{}&\textbf{}&\small{$1$}&\textbf{}& \text{\
}&\small{0}&\small{0}&\textbf{}& \text{\
}&\small{$-1$}&\small{0}&\textbf{}&
\text{\ }&\small{$-1$}&\small{0}&\textbf{}&
\small{$1$}&\small{0}&\small{0}&\small{0}
\\
\hline \textbf{5}& \text{\ }&\small{0}&\small{$1$}&\textbf{}&
\textbf{}&\textbf{}&\small{0}&\textbf{}& \text{\
}&\small{$1$}&\small{0}&\textbf{}& \text{\
}&\small{$1$}&\small{$-1$}&\textbf{}&
\text{\ }&\small{0}&\small{0}&\textbf{}&
\small{$-1$}&\small{0}&\small{0}&\small{0}
\\
\hline \textbf{6}& \text{\ }&\small{0}&\small{$1$}&\textbf{}&
\textbf{}&\textbf{}&\small{0}&\textbf{}& \text{\
}&\small{0}&\small{$1$}&\textbf{}& \text{\
}&\small{$1$}&\small{$-1$}&\textbf{}&
\text{\ }&\small{$1$}&\small{$-1$}&\textbf{}&
\small{$-1$}&\small{0}&\small{0}&\small{0}
\\
\hline \textbf{7}& \text{\ }&\small{$1$}&\small{0}&\textbf{}&
\textbf{}&\textbf{}&\small{0}&\textbf{}& \text{\
}&\small{$1$}&\small{0}&\textbf{}& \text{\
}&\small{0}&\small{0}&\textbf{}&
\text{\ }&\small{0}&\small{0}&\textbf{}&
\small{0}&\small{$-1$}&\small{$-1$}&\small{$1$}
\\
\hline \textbf{8}& \text{\ }&\small{$1$}&\small{0}&\textbf{}&
\textbf{}&\textbf{}&\small{0}&\textbf{}& \text{\
}&\small{$1$}&\small{0}&\textbf{}& \text{\
}&\small{$-1$}&\small{$1$}&\textbf{}&
\text{\ }&\small{0}&\small{0}&\textbf{}&
\small{$-1$}&\small{$1$}&\small{0}&\small{$-1$}
\\
\hline \textbf{9}& \text{\ }&\small{$1$}&\small{0}&\textbf{}&
\textbf{}&\textbf{}&\small{0}&\textbf{}& \text{\
}&\small{$1$}&\small{0}&\textbf{}& \text{\
}&\small{$-1$}&\small{$1$}&\textbf{}&
\text{\ }&\small{$-1$}&\small{$1$}&\textbf{}&
\small{$1$}&\small{$-1$}&\small{$-1$}&\small{0}
\\
\hline \textbf{10}& \text{\ }&\small{0}&\small{0}&\textbf{}&
\textbf{}&\textbf{}&\small{0}&\textbf{}& \text{\
}&\small{0}&\small{0}&\textbf{}& \text{\
}&\small{$1$}&\small{0}&\textbf{}&
\text{\ }&\small{$1$}&\small{0}&\textbf{}&
\small{$-1$}&\small{$-1$}&\small{$-1$}&\small{$1$}
\\
\hline
\end{tabular}
\caption{All classes of entropic bilocality inequalities. We have
listed the coefficients of one inequality in each row, and all
inequalities are of the form $\leq 0$.} \label{bilocalineqstable}
\end{table*}

\begin{figure}[t!]
\begin{center}
\includegraphics[width=0.8\linewidth]{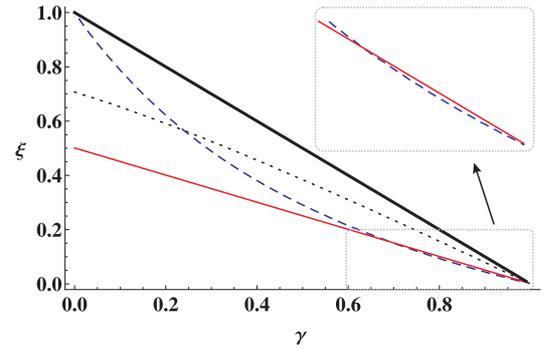}
\caption{(Color online) Parameter space of the
family~(\ref{nbbox}). The red dot-dashed line delimits the region
where~(\ref{AC}) and~(\ref{AAC}), and therefore~(\ref{nbbox}), are
local as witnessed by the CHSH inequality. The region above the
blue dashed line contains violations of the entropic bilocality
inequality \#7 from Table~\ref{bilocalineqstable}. The black line
delimits the region where~(\ref{AC}) and~(\ref{AAC}), and
therefore~(\ref{nbbox}), have a quantum realization with a single
$3$-qubit source. There is a small region (inset) in which our
inequality detects that~(\ref{nbbox}) is not bilocal, although the
box is tripartite local.} \label{fig:biviolation}
\end{center}
\end{figure}

\paragraph*{Looking for quantum violations.}

We now consider the quantum case. Instead of sending out
independent hidden variables $\lambda_1$ and $\lambda_2$, the two
sources now emit entangled quantum states. We take these to be
given by generic partially entangled two-qubit states
$$
\cos{\theta_k}\ket{00}+\sin{\theta_k}e^{i\phi_k}\ket{11}
$$
with $k=1,2$ indexing the two sources. Then $A$ and $C$ receive
one qubit each, while $B$ receives two. Upon choosing the
two-qubit measurement of $B$ to be in the Bell basis and
numerically optimizing over all projective measurements for A and
C, we have not been able to find any quantum violation of any of
our $52$ entropic bilocality inequalities.

On the other hand, it is not difficult to design some general
no-signaling boxes $P(a,b,c|x,z)$ which satisfy~(\ref{binosig}),
but violate some of our entropic inequalities. This applies for
example to the family of boxes, for parameters
$\xi,\gamma\in[0,1]$,
\begin{align}
\label{nbbox}
& P^{NB}(a,b,c|x,z)  \\
& = \frac{1}{8} \left( 1 + \xi (-1)^{{a}\oplus {b} \oplus
{c}\oplus xz} + (1-\xi-\gamma) (-1)^{{a}\oplus{b}\oplus{c}}
\right) \nonumber
\end{align}
This box can be understood as follows. The two outcomes
$b\in\{0,1\}$ both occur with probability $1/2$; if $b=0$, then
this creates between A and C the bipartite box
\begin{equation}
\label{AC} \xi P^{PR}+\gamma P^{c}+(1-\xi-\gamma)P^{w}
\end{equation}
and if $b=1$, then the resulting box between A and C is \beq
\label{AAC} \xi P^{APR}+\gamma P^{Ac}+(1-\xi-\gamma)P^{w}, \eeq
where $P^{APR}$ and $P^{Ac}$ stand for an ``anti-PR-box'', defined
as in~(\ref{isodist}) with $C=-1$, and classical
anti-correlations, respectively.

Depending on the value of $\xi$ and $\gamma$, the box $P^{NB}$ can
be tripartite local, tripartite nonlocal but tripartite quantum,
or post-quantumly tripartite nonlocal. $P^{NB}$
satisfies~(\ref{binosig}) since the bipartite marginal between A
and C is pure white noise, $P^{NB}(a,c|x,z)=P^{w}(a,c|x,z)=1/4$.

$P^{NB}$ violates some of our entropic bilocality inequalities. We
focus on the inequality 7 from Table~\ref{bilocalineqstable}.
Fig.~\ref{fig:biviolation} shows the region of violations as a
function of the pararameters $\xi$ and $\gamma$. Interestingly,
even in the region where $P^{NB}$ is local as a tripartite box,
the entropic bilocality inequality can be violated. This witnesses
that in those parameter ranges, $P^{NB}$ cannot be written in the
form~(\ref{bilocaldecom}), although the box is local
and~(\ref{binosig}) is satisfied.

\section{Conclusion}
\label{sec:conclusion}

In this work, we have exploited our general
framework~\cite{entropicpaper2} for deriving entropic
contextuality inequalities and entropic Bell inequalities. The
standard methods of computational geometry (like Fourier-Motzkin
elimination), which have been applied widely to the computation of
tight Bell inequalities, can be used for the computation of tight
(Shannon-type) entropic contexuality inequalities (Bell
inequalities). Following~\cite{AbramBrand} and related work, our
framework also treats nonlocality as a special case of
contextuality.

We also had shown in~\cite{entropicpaper2} that the family of
chained entropic inequalities derived by Braustein and
Caves~\cite{CHSHentropic} are the only non-trivial facets in the
appropriate scenario (Theorem~\ref{polygonthm}). Given that, we
have investigated quantum and more general violations of these
inequalities both in the Bell scenario case and in the
contextuality scenario case. Using a model of joint detection for
compatible observables, we noticed that quantum violations of a
certain entropic contextuality inequality exist for any positive
detection efficiency. Furthermore, we have fully characterized the
entropic inequalities for the simplest bilocality scenario. The
entropic bilocality inequalities can be violated by general
no-signalling correlations respecting the obvious
condition~(\ref{binosig}), but no quantum violations could be
found in terms of two independent sources of entangled quantum
states.

We have asked the question what the violation of an entropic
contextuality inequality or entropic Bell inequality might be a
resource for. In the case of the CHSH scenario, we have approached
this by noting some superficial similarities with the distillation
of nonlocality. We have speculated that the violation of the
entropic CHSH inequality may be a sufficient condition for the
possibility of distillation. If this would turn out to be true, it
may be very useful, since in general it is very difficult to
decide which no-signaling boxes are distillable and which ones are
not.

With our general framework, many more possibilities can be
explored. For example, the principle of information causality
(IC)~\cite{IC}, which has been proposed in order to understand the
implausible consequences of super-quantum correlations, is also an
entropic inequality. It can be shown that the inequality defining
IC is, up to symmetries, the only non-trivial facet of an entropic
cone defined by the IC scenario~\cite{preparation}. In principle,
modifying the scenario will therefore let us derive many other
IC-like principles as entropic inequalities, in particular some
with a multipartite flavor.

Since practical computations with entropic cones are extremely
demanding, it would be helpful to have more efficient algorithms
for the computation of facets of polyhedral cones in order to
characterize all (Shannon-type) entropic Bell inequalities for
bipartite scenarios with more observables per party, multipartite
Bell scenarios, and IC scenarios. To identify nonlinear problems
that turn to be linear in terms of entropies, as we have done with
bilocality, may also be an appealing line of future research.

\section*{Postscript}

While finishing this paper, the work~\cite{KRK} has appeared,
which also contains some of our results of
section~\ref{subsec:entropic_context}.

\begin{acknowledgments}
R.\,C.~would like to thank S.\,P.~Walborn for introducing him to
the Braunstein-Caves inequality. R.\,C.~was funded by the QESSENCE
project, T.\,F.~by the EU STREP QCS.
\end{acknowledgments}

\end{document}